\newcommand{\prop}{\mathsf{Prop}}
\let\sectionsymbol\S
\renewcommand{\phi}{\varphi}
\renewcommand{\S}{\mathsf{S}}
\newcommand{\E}{\mathsf{E}}
\newcommand{\K}{\mathsf{K}}
\newcommand{\univ}{\mathsf{A}}
\newcommand{\univdual}{\hat{\univ}}
\newcommand{\cL}{\mathcal{L}}
\newcommand{\cLKA}{\cL_{\K\univ}}
\newcommand{\cLSA}{\cL_{\S\univ}}
\newcommand{\sL}{\mathsf{L}}
\newcommand{\sLSA}{\sL_{\S\univ}}
\newcommand{\axiom}[1]{{\upshape{(#1)}}}
\newcommand{\axiomsub}[2]{{\upshape{(#1\textsubscript{$#2$})}}}
\newcommand{\Ks}{\axiomsub{K}{\S}}
\newcommand{\Ts}{\axiomsub{T}{\S}}
\newcommand{\fives}{\axiomsub{5}{\S}}
\newcommand{\Kuniv}{\axiomsub{K}{\univ}}
\newcommand{\Tuniv}{\axiomsub{T}{\univ}}
\newcommand{\fiveuniv}{\axiomsub{5}{\univ}}
\newcommand{\es}{\axiom{ES}}
\newcommand{\inc}{\axiom{Inc}}
\renewcommand{\mp}{\axiom{MP}}
\newcommand{\necuniv}{\axiomsub{Nec}{\univ}}
\newcommand{\rs}{\axiomsub{R}{\S}}
\renewcommand{\and}{\wedge}
\newcommand{\orr}{\vee}
\newcommand{\true}{\top}
\newcommand{\falsum}{\bot}
\newcommand{\sat}{\vDash}
\newcommand{\sataug}{\sat^\text{aug}}
\newcommand{\entails}{\vdash}
\newcommand{\entailsL}{\entails_{\sL}}
\newcommand{\entailsSA}{\entails_{\sLSA}}
\newcommand{\identicallyequal}{\equiv}
\begin{document}
\title{A Logic of Expertise}
\author{Joseph Singleton}
\authorrunning{J. Singleton}
\institute{Cardiff University, Cardiff, UK\\
\email{singletonj1@cardiff.ac.uk}}
\maketitle
\begin{abstract}
In this paper we introduce a simple modal logic framework to reason about the
\emph{expertise} of an information source. In the framework, a source is an
expert on a proposition $p$ if they are able to correctly determine the truth
value of $p$ in any possible world. We also consider how information may be
false, but true after \emph{accounting for the lack of expertise} of the
source. This is relevant for modelling situations in which information sources
make claims beyond their domain of expertise.
We use non-standard semantics for the language based on an \emph{expertise set} with
certain closure properties. It turns out there is a close connection between
our semantics and S5 epistemic logic, so that expertise can be expressed in
terms of \emph{knowledge} at all possible states. We use this connection to
obtain a sound and complete axiomatisation.

\keywords{Expertise \and Modal logic \and Information}
\end{abstract}
\section{Introduction}

Many scenarios require handling information from non-expert sources. Such
information can be false, even when sources are sincere, when sources make
claims regarding topics on which they are not experts. Accordingly, the (lack of)
expertise of the source must be taken into account when new information is
received.

In this paper we introduce a modal logic formalism for reasoning about the
expertise of information sources. Our logic includes operators for
expertise ($\E\phi$) and \emph{soundness} ($\S\phi$). Intuitively, a source $s$
has expertise on $\phi$ if $s$ is able to correctly determine the truth value
of $\phi$ in \emph{any possible world}. On the other hand, $\phi$ is sound if
it true \emph{up to the limits of the expertise} of $s$. That is, if $\phi$ is
logically weakened to ignore information beyond the expertise of $s$, the resulting
formula is true. This provides a crucial link between expertise
and truthfulness of information, which allows some information to derived from
false statements.

The related notion of \emph{trust} has been well-studied from a logical
perspective~\cite{booth_trust_2018,liau_2003,dastani2004inferring,lorini2014trust,herzig2010logic}.
Despite some similarities, trustworthiness on $\phi$ is different from
expertise on $\phi$. Firstly, whether $s$ is trusted on $\phi$ is a property of
the \emph{truster}, not of $s$. In contrast, we aim to model expertise
objectively as a property of $s$ alone. Secondly, we interpret expertise
\emph{globally}: whether or not $s$ is an expert on $\phi$ does not depend on
the truth value of $\phi$ in any particular state. In this sense expertise is a
counterfactual notion, in that it refers to possible worlds other than the
``actual'' one. This is not necessarily so for trust; e.g. we may not want to trust
the judgement of $s$ on $\phi$ if we know $\phi$ to be false in the actual
world.

\textbf{Contribution and paper outline.} Our main conceptual contribution is a
modal logic framework for reasoning about expertise and soundness of
information. This framework is motivated via an example in
\cref{sec:expertise_and_soundness}, after which the syntax and semantics are
formally introduced. \Cref{sec:s5_link} goes on to establish a connection
between our logic and \emph{S5 epistemic logic}, which provides an alternative
interpretation of our notion of expertise in terms of S5 knowledge. Our main
technical result is a sound and complete axiomatisation, given in
\cref{sec:axiomatisation}.

\section{Expertise and Soundness}
\label{sec:expertise_and_soundness}

The core notions we aim to model are \emph{expertise} and \emph{soundness} of
information. We illustrate both with a simplified example.

\begin{example}
\label{ex:economist_motivation}

Consider an economist reporting on the effects of COVID-19 vaccine rollout, who
states that widespread vaccination will aid economic recovery ($r$), but that
the vaccine can cause health problems ($p$). Assume that the economist is an
expert on matters to do with the economy ($\E r$), so that they only provide
correct information on proposition $r$, but is not an expert on matters of
health ($\neg\E p$). For the sake of the example, suppose economic
recovery will indeed follow, but there are \emph{no} health problems associated
with the vaccine. Then while the economist's report of $r \and p$ is false, it
is true on the propositions on which they are an expert. Consequently, if one
ignores the parts of the report on which the economist has no expertise, the
report becomes true. We say that $r \and p$ is \emph{sound}, given the
expertise of the source on $r$ but not $p$.

\end{example}

\subsection{Syntax}

We introduce the language of expertise and soundness.
Let $\prop$ be a countable set of propositional variables. The language $\cL$
is defined by the following grammar:
\[ \phi ::= p \mid \neg\phi \mid \phi \and \phi \mid \E\phi \mid \S\phi \mid
\univ\phi \]
for $p \in \prop$. Note that formulas of $\cL$ describe the expertise of a
\emph{single} source. The language can be easily extended to handle multiple
sources by adding modalities $\E_s$ and $\S_s$ for each source $s$, but we do
not do so here.

We read $\E\phi$ as ``the source has expertise on $\phi$'',
and $\S\phi$ as ``$\phi$ is sound for the source to report''. We include the
universal modality $\univ$~\cite{goranko_1992} for technical reasons to aid
in the axiomatisation of \cref{sec:axiomatisation}; $\univ\phi$ is to be read
as ``$\phi$ holds in all states''. Other Boolean connectives ($\orr$,
$\rightarrow$, $\leftrightarrow$) and truth values ($\true$, $\falsum$) are
introduced as abbreviations. We denote by $\hat{\E}$, $\hat{\S}$ and
$\univdual$ the dual operators corresponding to $\E$, $\S$ and $\univ$
respectively (e.g. $\hat{\E}\phi$ stands for $\neg \E \neg \phi$).

\subsection{Semantics}

Formulas of $\cL$ are interpreted via non-standard semantics on the basis of an
\emph{expertise set}.

\begin{definition}
An \emph{expertise frame} is a pair $F = (X, P)$, where $X$ is a set of states
and $P \subseteq 2^X$ is an \emph{expertise set} satisfying the following
properties:
\begin{enumerate}
    \item[] \textbf{(P1)} $X \in P$
    \item[] \textbf{(P2)} If $A \in P$ then $X \setminus A \in P$
    \item[] \textbf{(P3)} If $\{A_i\}_{i \in I} \subseteq P$, then $\bigcap_{i
          \in I}A_i \in
          P$
\end{enumerate}

An \emph{expertise model} is a triple $M = (X, P, v)$, where $(X, P)$ is an
expertise frame and $v: \prop \to 2^X$ is a valuation function.

\end{definition}

Intuitively, $A \in P$ means the source has the expertise to determine whether
or not the ``actual'' state of the world, whatever that may be, lies in $A$.
Implicitly in this interpretation we assume that the expertise of the source
does not depend on this ``actual'' state. This makes our semantics for
expertise formulas a special case of the \emph{neighbourhood
semantics}~\cite{pacuit2017neighborhood}, where the neighbourhoods of all
states in $X$ are the same (i.e. $N(x) \identicallyequal P$).

\textbf{(P1)} simply says that the
source is able to determine that the state lies in $X$, i.e. that the source
has expertise on tautologies. \textbf{(P2)} says that $P$ is closed under
complements. This is a natural requirement, given our intended interpretation
of $P$: if the source can determine whether the actual state $x$ lies inside
$A$ or not, the same clearly holds for $X \setminus A$. Note that
\textbf{(P1)} and \textbf{(P2)} together imply that $\emptyset \in P$.
Finally, \textbf{(P3)} says that $P$ is closed under (arbitrary) intersections,
which implies expertise is closed under conjunctions. Together with
\textbf{(P2)}, this implies $P$ is also closed under (arbitrary) unions, and
thus expertise is also closed under disjunctions.
We come to the truth conditions for $\cL$ formulas with respect to expertise
models.

\begin{definition}
\label{def:truth_conditions}
Let $M = (X, P, v)$ be an expertise model. The satisfaction relation between
points $x \in X$ and formulas $\phi \in \cL$ is defined inductively as follows:
\[
    \begin{array}{lll}
        &M, x \sat p &\iff x \in v(p) \\
        &M, x \sat \neg\phi &\iff M, x \not\sat \phi \\
        &M, x \sat \phi \and \psi &\iff M, x \sat \phi
            \text{ and } M, x \sat \psi \\
        &M, x \sat \E\phi &\iff \|\phi\|_M \in P \\
        &M, x \sat \S\phi &\iff \text{ for all } A \in P,\ \|\phi\|_M \subseteq A
            \text{ implies } x \in A \\
        &M, x \sat \univ\phi &\iff \text{ for all } y \in X,\ M, y \sat \phi
    \end{array}
\]
where $\|\phi\|_M = \{x \in X \mid M, x \sat \phi\}$. We write $M \sat \phi$ if
$M, x \sat \phi$ for all $x \in X$, and $\sat \phi$ if $M \sat \phi$ for all
expertise models $M$; we say $\phi$ is \emph{valid} in this case. Write
$\phi \equiv \psi$ if $\sat \phi \leftrightarrow \psi$.

\end{definition}

The clauses for propositional variables and propositional connectives are
standard, and the clause for $\univ\phi$ is straightforward. The clause for
$\E\phi$ follows the intuition highlighted above: $\E\phi$ is true iff the set
of states $\|\phi\|_M$ at which $\phi$ is true lies in the expertise
set. Note that the truth value of $\E\phi$ does not depend on
the state $x$. For $\S\phi$ to hold at $x \in X$, we require that
all supersets of $\|\phi\|_M$ on which the source is an expert must
contain $x$. That is, any logical weakening of $\phi$ is true at $x$, whenever
the source has expertise on the weaker formula. We illustrate the semantics by
formalising \cref{ex:economist_motivation}, and conclude this section by noting
some (in)validities that follow directly from \cref{def:truth_conditions}.

\begin{example}
\label{ex:economist_formalised}

Consider a model $M = (X, P, v)$, where $X = \{a,b,c,d\}$,
$v(r) = \{a,c\}$ and $v(p) = \{a, b\}$, and $P = \{\emptyset, \{a,c\}, \{b,d\},
X\}$. Then $a$ satisfies $r \and p$, $b$ satisfies $\neg r \and p$, $c$
satisfies $r \and \neg p$, and $d$ satisfies $\neg r \and \neg p$.

In \cref{ex:economist_motivation} we assumed the economist had expertise on
$r$. Here we have $\|r\|_M = \{a,c\} \in P$, so $M \sat \E r$ as expected. We
also claimed $r \and p$ was sound when $r$ is true and $p$ is false, i.e. $M, c
\sat \S(r \and p)$. Indeed, $\|r \and p\|_M = \{a\}$, and the supersets of
$\{a\}$ in $P$ are $\{a,c\}$ and $X$. Clearly both contain $c$, so $M, c \sat
\S(r \and p)$. This situation is also depicted graphically in
\cref{fig:graphical_example}.

\end{example}

\begin{figure}
    \centering
    \begin{tikzpicture}[scale=3]
        \node (a) at (0, 0) {\Large $a$};
        \node (b) at (1, 0) {\Large $b$};
        \node (c) at (0, 1) {\Large $c$};
        \node (d) at (1, 1) {\Large $d$};

        \draw[rounded corners,thick] (-0.2, -0.2) rectangle (1.2, 0.2);
        \node at (0.5, 0) {$\|p\|_M$};
        \draw[rounded corners,thick] (-0.2, -0.2) rectangle (0.2, 1.2);
        \node at (0, 0.5) {$\|r\|_M$};
        \draw[rounded corners,draw=red,thick] (-0.17, -0.17) rectangle (0.17, 1.17);
        \draw[rounded corners,draw=red,thick] (1 - 0.17, -0.17) rectangle (1 + 0.17, 1.17);
        \draw[rounded corners,draw=blue,thick] (-0.15, -0.15) rectangle (0.15, 0.15);
    \end{tikzpicture}
    \caption{
        Graphical depiction of the model in \cref{ex:economist_formalised}. The
        red boxes show the sets in $P$ (with $\emptyset$ and $X$ omitted).
        While the report $r \and p$ is false at $c$ (i.e. $c$ is not contained
        in the blue box), all red boxes containing the blue box do contain $c$,
        so $M, c \sat \S(r \and p)$.
    }
    \label{fig:graphical_example}
\end{figure}
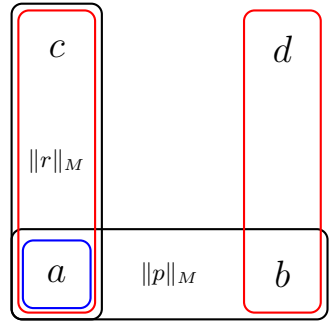

\begin{proposition}
\label{prop:basic_validities}
For any $\phi, \psi \in \cL$ and any model $M$,
\begin{enumerate}
    \item $\E\phi \equiv \E\neg\phi \equiv \univ\E\phi$
    \item Either $M \sat \E\phi$ or $M \sat \neg\E\phi$
    \item $\sat \E\true \and \E\falsum \and \E\E\phi$
    \item $\sat (\E\phi \and \E\psi) \rightarrow \E(\phi \and \psi)$
    \item The distribution axiom $\E(\phi \rightarrow \psi) \rightarrow (\E\phi
          \rightarrow \E\psi)$ is not in general valid\footnotemark{}
    \item $\sat \phi \rightarrow \S\phi$
    \item If $\sat \phi \rightarrow \psi$ then $\sat (\S\phi \and \E\psi)
          \rightarrow \psi$
\end{enumerate}

\footnotetext{
    For a counterexample, consider $X = \{a,b,c\}$, $P =
    \{\emptyset, \{a\}, \{b,c\}, X\}$ and $v$ such that $v(p) = \{a\}$, $v(q) =
    \{b\}$. Then $\|p\|_M = \{a\} \in P$, $\|q\|_M = \{b\} \notin P$ and $\|p
    \rightarrow q\|_M = \{b,c\} \in P$. Then we have $M \sat \E(p \rightarrow
    q) \and \E p \and \neg\E q$.
}

\end{proposition}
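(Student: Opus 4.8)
The plan is to work through each item directly from the truth conditions in \cref{def:truth_conditions}, exploiting the key observation that $\|\E\phi\|_M$, $\|\E\E\phi\|_M$ etc. are always either $\emptyset$ or $X$ (since the truth value of $\E\phi$ at $x$ does not depend on $x$). For item~1, note $\|\neg\phi\|_M = X \setminus \|\phi\|_M$, so by \textbf{(P2)} we have $\|\phi\|_M \in P$ iff $\|\neg\phi\|_M \in P$, giving $\E\phi \equiv \E\neg\phi$; and since $M,x\sat\E\phi$ is independent of $x$, either $\|\E\phi\|_M = X$ or $\|\E\phi\|_M=\emptyset$, and in both cases $M,x\sat\univ\E\phi \iff M,x\sat\E\phi$. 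Item~2 is exactly this state-independence restated. For item~3: $\|\true\|_M = X \in P$ by \textbf{(P1)}, $\|\falsum\|_M = \emptyset \in P$ (noted after the frame definition), and $\|\E\phi\|_M \in \{\emptyset, X\} \subseteq P$, so all three conjuncts hold everywhere. Item~4 follows because $\|\phi\and\psi\|_M = \|\phi\|_M \cap \|\psi\|_M$, and \textbf{(P3)} gives closure under (finite, hence binary) intersection. For item~5 (the footnoted non-validity), it suffices to verify the stated counterexample: check $\{\emptyset,\{a\},\{b,c\},X\}$ satisfies \textbf{(P1)}–\textbf{(P3)}, compute $\|p\|_M=\{a\}\in P$, $\|q\|_M=\{b\}\notin P$, $\|p\to q\|_M = X\setminus\{a\}=\{b,c\}\in P$, so $\E(p\to q)$, $\E p$, $\neg\E q$ all hold — contradicting the distribution axiom.

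For item~6, fix $x$ with $M,x\sat\phi$; then for any $A\in P$ with $\|\phi\|_M \subseteq A$ we have $x\in\|\phi\|_M\subseteq A$, so $x\in A$, which is exactly the condition for $M,x\sat\S\phi$. Item~7 is the most substantive: assume $\sat\phi\to\psi$, so $\|\phi\|_M\subseteq\|\psi\|_M$ in every model, and fix $x$ with $M,x\sat\S\phi\and\E\psi$. From $M,x\sat\E\psi$ we get $\|\psi\|_M\in P$; combined with $\|\phi\|_M\subseteq\|\psi\|_M$, the set $\|\psi\|_M$ is a superset of $\|\phi\|_M$ lying in $P$, so the soundness clause for $\S\phi$ at $x$ forces $x\in\|\psi\|_M$, i.e. $M,x\sat\psi$. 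Hence $M,x\sat(\S\phi\and\E\psi)\to\psi$ for all $x$ and all $M$.

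I do not anticipate a genuine obstacle here, as every item reduces to an unwinding of the semantic clauses plus a frame property; the only thing requiring a little care is item~7, where one must remember that soundness quantifies over \emph{all} expert supersets and then supply $\|\psi\|_M$ as the witness, using both the entailment $\|\phi\|_M\subseteq\|\psi\|_M$ and the expertise assumption $\|\psi\|_M\in P$. The counterexample in item~5 should also be double-checked to confirm the proposed $P$ is genuinely closed under complement and intersection (it is: the nonempty proper members $\{a\}$ and $\{b,c\}$ are complementary, and any intersection among $\emptyset,\{a\},\{b,c\},X$ stays in the set).
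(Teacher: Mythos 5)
Your proposal is correct and follows exactly the route the paper intends: the paper offers no explicit proof beyond remarking that these items "follow directly from" the truth conditions (plus the footnoted counterexample for item 5, which you verify identically), and your direct unwinding of the semantic clauses together with \textbf{(P1)}–\textbf{(P3)} is precisely that argument. In particular your treatment of item 7 — supplying $\|\psi\|_M$ as the witness superset in $P$ — is the right and only subtle step.
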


\section{Connection with S5 Epistemic Logic}
\label{sec:s5_link}

In this section we show that, despite the non-standard semantics for $\E\phi$
and $\S\phi$, expertise and soundness can be equivalently defined by the
standard relational semantics~\cite{blackburn2002modal} in the language $\cLKA$
with knowledge and universal modalities $\K$ and $\univ$. It will be seen
that the accessibility relation for $\K$ in the relational model $M^*$
corresponding to an expertise model $M$ is in fact an equivalence relation, so
that $M^*$ is an \emph{S5 model}~\cite[\sectionsymbol 4.1]{blackburn2002modal}.
S5 represents an ``ideal'' form of knowledge, which satisfies the KT5 axioms:
knowledge is closed under logical consequence (K: $\K(\phi \rightarrow \psi)
\rightarrow (\K\phi \rightarrow \K\psi)$), all that is known is true (T:
$\K\phi \rightarrow \phi$), and if $\phi$ is not known, this is itself known
(5: $\neg\K\phi \rightarrow \K\neg\K\phi$).
First, let us define the relational semantics for $\cLKA$.

\begin{definition}
\label{def:relational_semantics}
A \emph{relational model} is a triple $M' = (X, R, v)$, where $X$ is a set of
states, $R \subseteq X \times X$ is a binary relation on $X$, and $v: \prop \to
2^X$ is a valuation. Given a relational model $M'$, the satisfaction relation
between points $x \in X$ and formulas $\phi \in \cLKA$ is defined inductively
by
\[
    \begin{array}{lll}
        &M', x \sat \K\phi &\iff \text{ for all } y \in X, xRy
            \text { implies } M', y \sat \phi \\
        &M', x \sat \univ\phi &\iff \text{ for all } y \in X,\ M', y \sat \phi
    \end{array}
\]
where the clauses for propositional connectives are as in
\cref{def:truth_conditions}.
\end{definition}

Say a relational model $M' = (X, R, v)$ is an \emph{S5 model} if $R$ is an
equivalence relation. In the context of S5, $R$ is an \emph{epistemic
accessibility relation}: $xRy$ means the source considers $y$ as a possible
state if the actual state is $x$. The source then `knows' $\phi$ at $x$ if
$\phi$ is true in every state accessible from $x$.

The following result shows how one can form a unique S5 model from an expertise
model, and vice versa.

\begin{lemma}
\label{lemma:expertise_set_equiv_reln}
Let $X$ be a set. Let $\mathcal{P}$ denote the set of expertise sets over $X$
-- i.e.  the set of all $P \subseteq X$ satisfying \textbf{(P1)}, \textbf{(P2)}
and \textbf{(P3)} -- and let $\mathcal{E}$ denote the set of equivalence
relations over $X$. Then there is a bijection $P \mapsto R_P$ from
$\mathcal{P}$ into $\mathcal{E}$ such that, for all $A \subseteq X$,
\begin{equation}
\label{eqn:p_unions_of_equiv_classes}
    A \in P \iff A \text{ is a union of equivalence classes of } R_P
\end{equation}
\end{lemma}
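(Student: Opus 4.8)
The plan is to exhibit the map explicitly, reduce everything to one structural fact about the equivalence classes of the constructed relation, and then read off the bijection.

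For $P \in \mathcal{P}$ I would define $R_P \subseteq X \times X$ by setting $x R_P y$ iff $x \in A \Leftrightarrow y \in A$ for every $A \in P$ --- that is, $x$ and $y$ are related exactly when no set in $P$ separates them. This is visibly reflexive, symmetric and transitive, so $R_P \in \mathcal{E}$ and the map $P \mapsto R_P$ is well defined. The crux of the proof, which I would establish first, is the identity
\[
    [x]_{R_P} \;=\; \bigcap\{A \in P \mid x \in A\} \qquad \text{for every } x \in X.
\]
The inclusion $\subseteq$ is immediate from the definition of $R_P$. For $\supseteq$, given $y$ in the right-hand intersection and an arbitrary $B \in P$, one argues $x \in B \Rightarrow y \in B$ directly, and $x \notin B \Rightarrow y \notin B$ by applying the hypothesis to $X \setminus B$, which lies in $P$ by \textbf{(P2)}; hence $x \in B \Leftrightarrow y \in B$, so $y \in [x]_{R_P}$. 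Since $X \in P$ by \textbf{(P1)}, the intersection is over a nonempty subfamily of $P$, so \textbf{(P3)} yields $[x]_{R_P} \in P$: every $R_P$-class is itself a member of $P$. I expect this step to be the main obstacle, as it is exactly where all three frame conditions get used.

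Granting this, \cref{eqn:p_unions_of_equiv_classes} follows quickly. If $A \in P$, then for $x \in A$ with $x R_P y$ we get $y \in A$ by applying the definition of $R_P$ to the set $A$ itself, so $A$ is a union of $R_P$-classes. Conversely, any union of $R_P$-classes is a union of members of $P$ by the identity above, and $P$ is closed under arbitrary unions (noted after the definition of expertise frames, as a consequence of \textbf{(P2)} and \textbf{(P3)}), hence belongs to $P$.

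It then remains to check that $P \mapsto R_P$ is a bijection onto $\mathcal{E}$. For injectivity, if $R_P = R_Q$ then \cref{eqn:p_unions_of_equiv_classes} characterises both $P$ and $Q$ as the family of unions of classes of this common relation, so $P = Q$. For surjectivity, given $R \in \mathcal{E}$ let $P_R$ be the family of all unions of $R$-classes; one checks routinely that complements and arbitrary intersections of $R$-saturated sets are $R$-saturated and that $X$ is a union of $R$-classes, so $P_R$ satisfies \textbf{(P1)}--\textbf{(P3)} and thus $P_R \in \mathcal{P}$. Finally $R_{P_R} = R$: if $x R y$ then $x$ and $y$ lie in the same $R$-saturated sets, so $x R_{P_R} y$; conversely, $x R_{P_R} y$ applied to $A = [x]_R \in P_R$ forces $y \in [x]_R$, i.e. $x R y$. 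Hence every equivalence relation arises as some $R_P$, completing the proof.
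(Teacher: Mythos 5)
Your proof is correct and follows essentially the same route as the paper: both arguments hinge on identifying the $R_P$-class of $x$ with the minimal set $\bigcap\{A \in P \mid x \in A\}$, which lies in $P$ by \textbf{(P1)}--\textbf{(P3)}, and then derive \cref{eqn:p_unions_of_equiv_classes}, injectivity and surjectivity exactly as you do. The only (cosmetic) difference is that the paper constructs the partition from these minimal sets first and defines $R_P$ from it, whereas you define $R_P$ directly as indistinguishability by the sets of $P$, which makes reflexivity, symmetry and transitivity immediate at the cost of having to prove the class identity afterwards.
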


\begin{proof}
    Given an expertise set  $P \in \mathcal{P}$ and $x \in X$, write
    \[
        A_x = \bigcap \{A \in P \mid x \in A\}
    \]
    Note that $A_x \in P$ by \textbf{(P3)}, so $A_x$ is the smallest set in $P$
    containing $x$.\footnote{Also note that $X \in P$ by \textbf{(P1)}, so
    there is at least one $A \in P$ containing $x$.} Set $\Pi = \{A_x \mid x
    \in X\} \subseteq P$. We claim that $\Pi$ is a partition of $X$. It is
    clear that $\Pi$ covers $X$, since each $x$ lies in $A_x$ by definition. We
    show that any distinct $A_x, A_y \in \Pi$ are disjoint. Without loss of
    generality, $A_x \not\subseteq A_y$. Hence $x \notin A_y$ (otherwise $A_y$
    appears in the intersection defining $A_x$ and we get $A_x \subseteq A_y$).
    That is, $x \in A_x \setminus A_y = A_x \cap (X \setminus A_y)$. But this
    difference lies in $P$ by \textbf{(P2)} and \textbf{(P3}). Since $A_x$ is
    the smallest set in $P$ containing $x$, we get $A_x \subseteq A_x \setminus
    A_y$. In particular, $A_x \cap A_y = \emptyset$.

    Let $R_P$ be the equivalence relation defined by the partition $\Pi$, i.e.
    $x R_P y$ iff $A_x = A_y$. We show \cref{eqn:p_unions_of_equiv_classes}
    holds. First suppose $A \in P$. Then $A = \bigcup_{x \in A}A_x$; the
    left-to-right inclusion is clear since $x \in A_x$ for all $x$, and the
    right-to-left inclusion holds since $x \in A$ implies $A_x \subseteq
    A$ for $A \in P$. Since the $A_x$ form the equivalence classes of $R_P$, we
    are done.

    Now suppose $A$ is a union of equivalence classes of $R_P$, i.e. $A =
    \bigcup_{x \in B}{A_x}$ for some $B \subseteq X$. Since each $A_x$ lies in
    $P$ and $P$ is closed under unions by \textbf{(P2)} and \textbf{(P3)}, we
    have $A \in P$. Hence \cref{eqn:p_unions_of_equiv_classes} is shown.

    It only remains to show that the mapping $P \mapsto R_P$ is bijective.
    Injectivity follows easily from \cref{eqn:p_unions_of_equiv_classes}, since
    $P$ is fully determined by $R_P$. For surjectivity, take any equivalence
    relation $R \in \mathcal{E}$ on $X$. For $x \in X$, let $[x]_R$ denote the
    equivalence class of $X$. Let $P$ consist of all unions of equivalence
    classes, i.e.
    \[
        P = \left\{\bigcup_{x \in B}{[x]_R} \mid B \subseteq X\right\}
    \]
    We need to show that $P \in \mathcal{P}$ -- i.e. \textbf{(P1)},
    \textbf{(P2)} and \textbf{(P3)} hold -- and that $R_P = R$. For
    \textbf{(P1)}, taking $B = X$ gives $X \in P$. For \textbf{(P2)}, suppose
    $A = \bigcup_{x \in B}{[x]_R} \in P$. It is easily verified that $X
    \setminus A = \bigcup_{y \in X \setminus A}{[y]_R} \in P$, so \textbf{(P2)}
    holds. \textbf{(P3)} follows from \textbf{(P2)} and the fact that $P$ is
    closed under unions, which is evident from the definition. Finally, it
    follows from the definition of $P$ and \cref{eqn:p_unions_of_equiv_classes}
    that a set $A \subseteq X$ is a union of equivalence classes of $R$ if and
    only if it is a union of equivalence classes of $R_P$. Since distinct
    equivalence classes are disjoint, this implies that the equivalence classes
    of $R$ coincide with those of $R_P$, and $R = R_P$ as required. \qed
\end{proof}

On the syntactic side, define a translation $t: \cL \to \cLKA$ inductively by
$t(p) = p$, $t(\neg\phi) = \neg t(\phi)$, $t(\phi \and \psi) = t(\phi) \and
t(\psi)$, $t(\univ\phi) = \univ t(\phi)$, and
\[
        t(\E\phi) = \univ(t(\phi) \rightarrow \K t(\phi));
        \quad
        \quad
        t(\S\phi) = \neg\K\neg t(\phi)
\]
We then have that $\phi \in \cL$ is true in an expertise model exactly when
when $t(\phi)$ is true in the induced S5 model $M^*$.

\begin{theorem}
\label{thm:s5_link}
Let $M = (X, P, v)$ be an expertise model. Then $M^* = (X, R_P, v)$ is an S5
model, and
\[
    M, x \sat \phi \iff M^*, x \sat t(\phi)
\]
\end{theorem}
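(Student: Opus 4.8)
The plan is to prove both claims by induction on the structure of $\phi$. First, the fact that $M^* = (X, R_P, v)$ is an S5 model is immediate: by \cref{lemma:expertise_set_equiv_reln}, $R_P$ is an equivalence relation, which is exactly the definition of an S5 model. The substantive work is the equivalence $M, x \sat \phi \iff M^*, x \sat t(\phi)$, which I would establish by structural induction, with the propositional variable case holding since $v$ is shared between the two models, and the Boolean and $\univ$ cases following routinely from the inductive hypothesis and the fact that $t$ commutes with $\neg$, $\and$, and $\univ$ (and that $\|\phi\|_M = \|t(\phi)\|_{M^*}$ as a consequence of the IH).

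The two interesting cases are $\E\phi$ and $\S\phi$, and these are where \cref{eqn:p_unions_of_equiv_classes} does the heavy lifting. For $\E\phi$: by the IH, $\|\phi\|_M = \|t(\phi)\|_{M^*}$; call this set $A$. On the expertise side, $M, x \sat \E\phi$ iff $A \in P$. On the relational side, $M^*, x \sat t(\E\phi) = \univ(t(\phi) \rightarrow \K t(\phi))$ iff for all $y$, $y \in A$ implies every $R_P$-successor of $y$ is in $A$ — i.e., iff $A$ is closed under $R_P$, i.e., iff $A$ is a union of $R_P$-equivalence classes. By \cref{eqn:p_unions_of_equiv_classes}, this is equivalent to $A \in P$, closing the case. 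For $\S\phi$: again let $A = \|\phi\|_M = \|t(\phi)\|_{M^*}$. Then $M^*, x \sat t(\S\phi) = \neg\K\neg t(\phi)$ iff there exists $y$ with $x R_P y$ and $y \in A$, i.e., iff $[x]_{R_P} \cap A \neq \emptyset$. I would show this is equivalent to the expertise condition ``for all $B \in P$, $A \subseteq B$ implies $x \in B$''. Using \cref{eqn:p_unions_of_equiv_classes}, the smallest element of $P$ containing $A$ is precisely the union of all equivalence classes meeting $A$; call it $\overline{A}$. The expertise condition says $x \in \overline{A}$, which holds iff $[x]_{R_P}$ is one of the classes meeting $A$, iff $[x]_{R_P} \cap A \neq \emptyset$ — exactly the relational condition.

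The main obstacle, such as it is, lies in the $\S\phi$ case: one must be careful that the quantification ``for all $A \in P$ with $\|\phi\|_M \subseteq A$'' is correctly matched with the single ``witness'' formulation coming from $\neg\K\neg$, which requires identifying the smallest superset of $\|\phi\|_M$ lying in $P$ and arguing it is a union of equivalence classes (using $\bigcap$-closure \textbf{(P3)} for existence of the smallest such set, and \cref{eqn:p_unions_of_equiv_classes} for its shape). Everything else is bookkeeping. I would also note explicitly, before the induction, the auxiliary fact $\|\phi\|_M = \|t(\phi)\|_{M^*}$ follows from the claimed equivalence, so it can be invoked freely within the inductive step for the subformulas.
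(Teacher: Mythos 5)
Your proposal is correct and follows essentially the same route as the paper: induction on formulas, with \cref{eqn:p_unions_of_equiv_classes} from \cref{lemma:expertise_set_equiv_reln} doing the work in the $\E$ and $\S$ cases, and your set $\overline{A}$ is exactly the witness $\bigcup_{y \in \|\phi\|_M}A_y$ that the paper constructs (the paper merely phrases the $\S$ case as two contrapositive implications rather than a chain of equivalences through the smallest superset in $P$). No gaps.
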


Before the proof, note that since the mapping $P \mapsto R_P$ is a bijection
into the set of equivalence relations on $X$ (by
\cref{lemma:expertise_set_equiv_reln}), any S5 model $M' = (X, R, v)$ has an
expertise counterpart $M = (X, P, v)$ such that $M^* = M'$. In this sense, the
converse of \cref{thm:s5_link} also holds.

\begin{proof}[\cref{thm:s5_link}]

Let $M = (X, P, v)$ be an expertise model. By
\cref{lemma:expertise_set_equiv_reln}, $R_P$ is an equivalence relation and
$M^*$ is indeed an S5 model. Let $\Pi$ denote the partition of $X$
corresponding to $R_P$, and as in \cref{lemma:expertise_set_equiv_reln}, let
$A_x \in \Pi$ denote the cell of $\Pi$ containing $x$, i.e. the equivalence
class of $x$ in $R_P$. By \cref{eqn:p_unions_of_equiv_classes} in
\cref{lemma:expertise_set_equiv_reln}, $A \in P$ iff $A$ is a union of cells
from $\Pi$.

We show the desired semantic correspondence by induction on formulas. The cases
for the Boolean connectives and $\univ$ are straightforward. Suppose the result
holds for $\phi$ and $M, x \sat \E\phi$. Then $\|\phi\|_M \in P$, so
$\|\phi\|_M = \bigcup\mathcal{A}$ for some collection $\mathcal{A} \subseteq
\Pi$. Now suppose $y \in X$ and $M^*, y \sat t(\phi)$. Suppose $y R_P z$. By the
inductive hypothesis, $\|t(\phi)\|_{M^*} = \|\phi\|_M$, so $y \in
\bigcup\mathcal{A}$.  Since $A_y$ is the unique set in $\Pi$ containing $y$, we
must have $A_y \in \mathcal{A}$. Consequently, $y R_P z$ implies $z \in A_y
\subseteq \bigcup\mathcal{A} = \|t(\phi)\|_{M^*}$. That is, $M^*, z \sat
t(\phi)$. This shows $M^*, y \sat t(\phi) \rightarrow \K t(\phi)$ for arbitrary
$y \in X$, and so $M^*, x \sat \univ(t(\phi) \rightarrow \K t(\phi))$, i.e.
$M^*, x \sat t(\E\phi)$.

Conversely, suppose $M^*, x \sat \univ(t(\phi) \rightarrow \K t(\phi))$. We
claim that $\|\phi\|_M = \bigcup_{y \in \|\phi\|_M}A_y$. The left-to-right
inclusion is clear since $y \in A_y$ for each $y$. For the reverse inclusion,
let $y \in \|\phi\|_M$ and $z \in A_y$. Then $y R_P z$. By the inductive
hypothesis, $M^*, y \sat t(\phi)$. Since $t(\phi) \rightarrow \K t(\phi)$ holds
everywhere in $M^*$ by assumption, we have $M^*, y \sat \K t(\phi)$. Hence
$M^*, z \sat t(\phi)$, so $M, z \sat \phi$ and $z \in \|\phi\|_M$. This shows
$\|\phi\|_M = \bigcup_{y \in \|\phi\|_M}A_y$, i.e. $\|\phi\|_M$ is a union of
cells of $\Pi$. Hence $\|\phi\|_M \in P$ and $M, x \sat \E\phi$ as required.

Next we take the $\S\phi$ case. We prove both directions by contraposition.
First suppose $M, x \not\sat \S\phi$. Then there is some $A \in P$ with
$\|\phi\|_M \subseteq A$ and $x \notin A$. Suppose $x R_P y$. Then $A_x = A_y$. If
$y \in A$ we would get $A_x = A_y \subseteq A$, since $A_y$ is the smallest set
in $P$ containing $y$, but this contradicts $x \notin
A$. Hence $y \notin A$. In particular, $y \notin \|\phi\|_M$. By the inductive
hypothesis, $y \notin \|t(\phi)\|_{M^*}$, so $M^*, y \sat \neg t(\phi)$. This
shows $M^*, x \sat \K \neg t(\phi)$, i.e. $M^*,x \not\sat \neg\K\neg t(\phi)$
as required.

Finally, suppose $M^*,x \not\sat \neg\K\neg t(\phi)$. Take $A = \bigcup_{y \in
\|\phi\|_M}A_y$. Since each $A_y$ is in $P$ and $P$ is closed under unions by
\textbf{(P2}) and \textbf{(P3)}, we have $A \in P$. Clearly $\|\phi\|_M
\subseteq A$. Suppose for contradiction that $M, x \sat \S\phi$. Then $x \in
A$, i.e. there is $y \in \|\phi\|_M$ such that $x \in A_y$. Consequently $x R_P y$,
and $M^*, x \sat \K\neg t(\phi)$ implies $M^*, y \sat \neg t(\phi)$. But this
means $y \notin \|t(\phi)\|_{M^*} = \|\phi\|_M$ -- contradiction. \qed

\end{proof}

Note that, in the case of a propositional formula $\phi$, the translation $t$
takes $\S \phi$ to $\neg\K\neg \phi$, and $\E \phi$ to $\univ(\phi \rightarrow
\K \phi)$. The semantic correspondence in \Cref{thm:s5_link} therefore shows
that the soundness operator $\S$ is just the dual of an S5 knowledge operator:
$\phi$ is sound iff the source does not \emph{know} $\neg\phi$. Similarly,
$\E\phi$ holds iff for \emph{all} possible states, if $\phi$ were true then the
source would know it. Moreover, the equivalence relation used to interpret $\K$
is uniquely derived from the expertise model which interprets $\E$ and $\S$, by
\cref{lemma:expertise_set_equiv_reln}. This gives a new interpretation of
expertise and soundness which refers directly to the source's epistemic state
via the $\K$ operator.

\Cref{thm:s5_link} also allows $\E\phi$ be expressed solely in terms
of $\univ$ and $\S$:
\[
    \E\phi \equiv \univ(\S\phi \rightarrow \phi)
\]
i.e. the source has expertise on $\phi$ iff, in every possible
state, $\phi$ is sound only if it is in fact true. This can be seen by
recalling that $\E\phi$ is equivalent to $\E\neg\phi$ (by
\cref{prop:basic_validities}), and noting that $t(\E\neg\phi)$ is equivalent to
$t(\univ(\S\phi \rightarrow \phi))$.
Similarly, we can \emph{lack} of expertise in terms of $\S$ and the dual
operator $\univdual$:
\[
    \neg\E\phi \equiv \univdual(\S\phi \and \neg\phi) \equiv \univdual(\phi \and
    \S\neg\phi)
\]

\section{Axiomatisation}
\label{sec:axiomatisation}

\Cref{thm:s5_link} demonstrates a close semantic link between the logic of
expertise and S5. Accordingly, we can obtain a sound and complete
axiomatisation of the validities of $\cL$ by adapting any axiomatisation of S5
(although some care is required to handle the universal modality). Let $\sL$ be
the extension of the propositional calculus containing the axioms and inference
rules shows in \cref{tab:axiomatisation}.

\begin{table}
    \centering
    \caption{Axioms and inference rules for $\sL$.}
    \label{tab:axiomatisation}
    \begin{tabular}{ll}
        \hline
            \Ks      & $\S\phi \and \neg\S\psi \rightarrow \S(\phi \and \neg\psi)$ \\
            \Ts      & $\phi \rightarrow \S\phi$ \\
            \fives   & $\S\neg\S\phi \rightarrow \neg\S\phi$ \\
            \Kuniv & $\univ(\phi \rightarrow \psi) \rightarrow
                        (\univ\phi \rightarrow \univ\psi)$ \\
            \Tuniv & $\univ\phi \rightarrow \phi$ \\
            \fiveuniv & $\neg\univ\phi \rightarrow \univ\neg\univ\phi$ \\
            \es         & $\E\phi \leftrightarrow \univ(\S\phi \rightarrow \phi)$ \\
            \inc        & $\univ\phi \rightarrow \neg\S\neg\phi$ \\
            \mp         & From $\phi$ and $\phi \rightarrow \psi$ infer $\psi$ \\
            \necuniv  & From $\phi$ infer $\univ\phi$ \\
            \rs         & From $\phi \leftrightarrow \psi$ infer $\S\phi \leftrightarrow \S\psi$ \\
        \hline
    \end{tabular}
\end{table}

Here \Kuniv{}, \Tuniv{} and \fiveuniv{} are the standard KT5 axioms for
$\univ$, which characterise S5. \Ks{}, \Ts{} and \fives{} are reformulations
of the KT5 axioms for the dual operator $\hat\S = \neg\S\neg$; we present them
in terms of $\S$ rather than $\hat\S$ to aid readability and intuitive
interpretation of the axioms. \es{} captures the interaction between expertise
and soundness; the validity of this axiom was already shown as a consequence of
\cref{thm:s5_link}. Note that the necessitation rule \necuniv{} for $\univ$
and \inc{} imply necessitation for $\hat\S$ by \mp{}.

\begin{theorem}
\label{thm:axiomatisation}

$\sL$ is sound\footnote{Soundness of the logic $\sL$ for expertise frames
should not be confused with the notion of soundness inside the language.} and
complete with respect to expertise frames.

\end{theorem}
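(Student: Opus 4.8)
The plan is to prove soundness by a direct check and completeness by transferring a canonical-model argument for S5 through the correspondence of \cref{thm:s5_link}.

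\textbf{Soundness.} I would verify, by induction on $\sL$-derivations, that each axiom is valid and each rule preserves validity. \Ts{} is item~6 of \cref{prop:basic_validities}; \es{} was already shown valid as a consequence of \cref{thm:s5_link}; and \inc{} is immediate from \cref{def:truth_conditions}, since $\univ\phi$ forces $\|\neg\phi\|_M = \emptyset$, so that $\S\neg\phi$ holding at $x$ would require $x \in \bigcap P = \emptyset$. For \Ks{} and \fives{} one can unfold \cref{def:truth_conditions} directly using the closure properties \textbf{(P1)}--\textbf{(P3)} (for \Ks{}, e.g., from $\|\phi\and\neg\psi\|_M \subseteq C$ one gets $\|\phi\|_M \subseteq C \cup B$ for the witness $B$ to $\neg\S\psi$, and $C\cup B \in P$); alternatively one can observe that \Ks{}, \Ts{}, \fives{} are, up to propositional rewriting, the K, T, 5 schemata for the box $\hat\S$, which $t$ sends to the KT5 schemata for the S5 box $\K$, and then invoke \cref{thm:s5_link} together with the remark after it (validity over expertise models is equivalent to $t$-validity over S5 models). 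The axioms and rules for $\univ$ are standard, and \rs{} is sound because $\|\phi\|_M = \|\psi\|_M$ whenever $\phi \equiv \psi$.

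\textbf{Completeness.} First I would reduce to the $\{\S,\univ\}$-fragment $\cLSA$. Using \es{} and replacement of provably equivalent subformulas (available from \rs{} for $\S$ and from \necuniv{}, \Kuniv{} for $\univ$), every $\phi \in \cL$ is $\sL$-provably equivalent to a formula $\phi^\flat \in \cLSA$ obtained by rewriting each subformula $\E\psi$ (innermost first) as $\univ(\S\psi\rightarrow\psi)$; by the validity of \es{} (equivalently, by $\E\psi \equiv \univ(\S\psi\rightarrow\psi)$, noted after \cref{thm:s5_link}), $\phi$ and $\phi^\flat$ are also semantically equivalent. So it suffices to satisfy an arbitrary $\sL$-consistent $\phi^\flat \in \cLSA$. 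Next I would build the canonical model $M^c = (X^c, R^c_{\hat\S}, R^c_\univ, v^c)$ for this fragment: $X^c$ is the set of maximal $\sL$-consistent sets of $\cLSA$-formulas, and $R^c_{\hat\S}$, $R^c_\univ$ are the canonical accessibility relations for the boxes $\hat\S$ and $\univ$, so that $\S$ is read as the diamond $\langle R^c_{\hat\S}\rangle$ and $\univ$ as $[R^c_\univ]$. Necessitation for $\hat\S$, needed to make $\hat\S$ a normal operator, follows from \necuniv{}, \inc{} and \mp{}. The standard Lindenbaum and existence arguments give the Truth Lemma; \Kuniv{}/\Tuniv{}/\fiveuniv{} make $R^c_\univ$ an equivalence relation and \Ks{}/\Ts{}/\fives{} (the KT5 schemata for $\hat\S$) make $R^c_{\hat\S}$ an equivalence relation; and \inc{} yields $R^c_{\hat\S} \subseteq R^c_\univ$. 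Given $\sL$-consistent $\phi^\flat$, extend it to a maximal consistent set $\Gamma$ and restrict $M^c$ to $W = [\Gamma]_{R^c_\univ}$. Since $R^c_\univ$ is an equivalence relation and $R^c_{\hat\S} \subseteq R^c_\univ$, the set $W$ is closed under both relations, so the restriction $M^g$ is a generated submodel and the Truth Lemma transfers to it; moreover $R^c_\univ$ is the total relation on $W$, so on $W$ the relational clause for $\univ$ agrees with the ``true at every state'' clause of \cref{def:relational_semantics}, and $R^c_{\hat\S}|_W$ is still an equivalence relation. Now \cref{lemma:expertise_set_equiv_reln}, applied to the set $W$ and the equivalence relation $R^c_{\hat\S}|_W$, yields an expertise set $P \subseteq 2^W$ with $R_P = R^c_{\hat\S}|_W$; set $M = (W, P, v^c|_W)$. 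Then $M^* = (W, R_P, v^c|_W)$ is, after identifying its $\K$-relation with $R^c_{\hat\S}|_W$, the same model as $M^g$. By \cref{thm:s5_link}, $M, \Delta \sat \chi \iff M^*, \Delta \sat t(\chi)$ for all $\chi \in \cL$, $\Delta \in W$; and since on $W$ the translation $t$ only turns $\S$ into the diamond of $R_P$ and leaves the (now total) $\univ$ alone, $M^*, \Delta \sat t(\chi) \iff M^g, \Delta \sat \chi$, which by the Truth Lemma for $M^g$ holds iff $\chi \in \Delta$. Hence $M, \Gamma \sat \phi^\flat$, and so $M, \Gamma \sat \phi$ since $\phi \equiv \phi^\flat$. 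This proves completeness, and with soundness the theorem follows.

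\textbf{Main obstacle.} The delicate point is the universal modality: making $\univ$ genuinely ``true at all states'' rather than just an S5 box requires passing to the $R^c_\univ$-generated submodel, and for that to be legitimate one needs the inclusion $R^c_{\hat\S} \subseteq R^c_\univ$ coming from \inc{} --- without it the $\hat\S$-accessible worlds could leave the chosen $R^c_\univ$-class, breaking the Truth Lemma for $\S$-formulas on the submodel. Secondary points needing care are the propositional rewriting that identifies \Ks{}/\Ts{}/\fives{} with the KT5 schemata for $\hat\S$, the verification that the $\E$-elimination step is simultaneously $\sL$-provable and validity-preserving, and the bookkeeping that lets \cref{lemma:expertise_set_equiv_reln} and \cref{thm:s5_link} re-package the final relational model as a genuine expertise model.
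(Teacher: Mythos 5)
Your proposal is correct and follows essentially the same route as the paper: a direct semantic check for soundness, elimination of $\E$ via \es{}, a canonical model for the $\{\S,\univ\}$-fragment in which \Ks{}/\Ts{}/\fives{} and \inc{} yield two nested equivalence relations, a generated submodel to turn $\univ$ into the genuine universal modality, and \cref{lemma:expertise_set_equiv_reln} plus \cref{thm:s5_link} to repackage the result as an expertise model. The only differences are organizational — the paper factors the argument through an explicit subsystem $\sLSA$ and an intermediate notion of ``augmented frame'', converting to an expertise set before taking the generated submodel, whereas you inline these steps and convert afterwards — and you correctly identify \inc{} (giving $R^c_{\hat\S} \subseteq R^c_\univ$) as the point on which the generated-submodel step depends.
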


Soundness is immediate for most of the axioms and inference rules. We give
details only for \Ks{} and \fives{}.

\begin{lemma}
    Axioms \Ks{} and \fives{} are valid in all expertise frames.
\end{lemma}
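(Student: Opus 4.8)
The plan is to reduce both claims to one observation about the semantics of $\S$. For any expertise model $M=(X,P,v)$ and any $\psi\in\cL$, the truth set $\|\S\psi\|_M$ is exactly the smallest member of $P$ containing $\|\psi\|_M$, which exists by \textbf{(P1)} and \textbf{(P3)}: unfolding \cref{def:truth_conditions}, $M,x\sat\S\psi$ holds iff $x$ lies in every $A\in P$ with $\|\psi\|_M\subseteq A$, i.e. iff $x\in\bigcap\{A\in P\mid\|\psi\|_M\subseteq A\}$, and this intersection is in $P$ by \textbf{(P3)}. Writing $\overline{S}$ for the least member of $P$ containing a given $S\subseteq X$, we thus have $\|\S\psi\|_M=\overline{\|\psi\|_M}$; in particular $\|\S\psi\|_M\in P$ for every $\psi$.

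For \fives{} this does almost all the work. Since $\|\S\phi\|_M\in P$, \textbf{(P2)} gives $\|\neg\S\phi\|_M=X\setminus\|\S\phi\|_M\in P$, and a set already in $P$ is its own least $P$-superset, so $\|\S\neg\S\phi\|_M=\overline{\|\neg\S\phi\|_M}=\|\neg\S\phi\|_M$. Hence $\S\neg\S\phi$ and $\neg\S\phi$ denote the same set of states in every model (indeed $\S\neg\S\phi\equiv\neg\S\phi$), so \fives{} is valid.

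For \Ks{} I would argue directly, using that \textbf{(P2)} and \textbf{(P3)} make $P$ closed under unions. Fix $M$ and a state $x$ with $M,x\sat\S\phi$ and $M,x\not\sat\S\psi$; the latter says $x\notin\overline{\|\psi\|_M}$. To see $M,x\sat\S(\phi\and\neg\psi)$, take any $A\in P$ with $\|\phi\and\neg\psi\|_M\subseteq A$; it suffices to show $x\in A$. The set $A\cup\overline{\|\psi\|_M}$ lies in $P$, and it contains $\|\phi\|_M$ because $\|\phi\|_M=\|\phi\and\neg\psi\|_M\cup(\|\phi\|_M\cap\|\psi\|_M)\subseteq A\cup\overline{\|\psi\|_M}$. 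Then $M,x\sat\S\phi$ forces $x\in A\cup\overline{\|\psi\|_M}$, and since $x\notin\overline{\|\psi\|_M}$ we get $x\in A$. As $A$ was arbitrary, $x\in\overline{\|\phi\and\neg\psi\|_M}=\|\S(\phi\and\neg\psi)\|_M$.

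I do not expect a real obstacle: both arguments are short set-theoretic manipulations. The one point that needs care is checking that $A\cup\overline{\|\psi\|_M}\in P$ and that it contains $\|\phi\|_M$, which uses closure of $P$ under unions together with the Boolean identity $\|\phi\|_M=\|\phi\and\neg\psi\|_M\cup(\|\phi\|_M\cap\|\psi\|_M)$. An alternative, perhaps more in the spirit of \cref{sec:axiomatisation}, is to invoke \cref{thm:s5_link} and the remark that every S5 model is some $M^*$: validity over expertise frames then reduces to validity over equivalence relations of the translated formulas, and since $t(\S\psi)=\neg\K\neg t(\psi)$ is the S5 ``diamond'' of $t(\psi)$, \Ks{} and \fives{} become the routine modal principles $(\neg\K\neg\alpha\and\K\neg\beta)\rightarrow\neg\K\neg(\alpha\and\neg\beta)$ and $\neg\K\neg\K\gamma\rightarrow\K\gamma$ --- the first valid on every frame, the second on equivalence relations.
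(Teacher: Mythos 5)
Your proof is correct and follows essentially the same route as the paper: both arguments are direct semantic verifications using closure of $P$ under unions and complements, with the same key steps (for \Ks{}, covering $\|\phi\|_M$ by $A$ together with a $P$-superset of $\|\psi\|_M$ avoiding $x$; for \fives{}, showing $\|\neg\S\phi\|_M \in P$). Your preliminary observation that $\|\S\psi\|_M$ is the least member of $P$ containing $\|\psi\|_M$ merely packages what the paper does inline --- its witness $B$ plays the role of your $\overline{\|\psi\|_M}$, and its computation of $\|\neg\S\phi\|_M$ as a union of complements is the De Morgan dual of your intersection.
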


\begin{proof}
Let $M = (X, P, v)$ be an expertise model. For \Ks{}, suppose $M, x \sat \S\phi
\and \neg\S\psi$. Take any $A \in P$ with $\|\phi \and \neg \psi\|_M \subseteq
A$. Then $\|\phi\|_M \setminus \|\psi\|_M \subseteq A$, so $\|\phi\|_M
\subseteq A \cup \|\psi\|_M$. Since $M, x \not\sat \S\psi$, there is $B \in P$
with $\|\psi\|_M \subseteq B$ and $x \notin B$. Now, $\|\phi\|_M \subseteq A
\cup \|\psi\|_M \subseteq A \cup B$, and $A \cup B \in P$ since $P$ is closed
under unions. Since $M, x \sat \S\phi$, any superset of $\|\phi\|_M$ in $P$
must contain $x$. Hence $x \in A \cup B$. But $x \notin B$, so we must have $x
\in A$. This shows $M, x \sat \S(\phi \and \neg\psi)$.

For \fives{}, suppose $M, x \sat \S\neg\S\phi$. It can be seen from
\cref{def:truth_conditions} that $\|\neg\S\phi\|_M = \bigcup\{X \setminus B
\mid B \in P, \|\phi\|_M \subseteq B\}$. It follows from \textbf{(P2)} and
\textbf{(P3)} that $\|\neg\S\phi\|_M \in P$. Consequently, $\|\neg\S\phi\|_M$
is itself a set in $P$ containing $\|\neg\S\phi\|_M$. Since $M, x \sat
\S\neg\S\phi$ we get $x \in \|\neg\S\phi\|_M$, i.e. $M, x \sat \neg\S\phi$ as
required. \qed

\end{proof}

The completeness proof requires some more machinery, and we use ideas found
in~\cite{bonanno2005simple,goranko_1992}. Let $\cLSA$ denote the fragment of
$\cL$ without the $\E$ modality, and let $\sLSA$ denote the logic of $\sL$ for
$\cLSA$ without axiom \es. For a frame $F = (X, P)$, let $R_P$ denote the
corresponding equivalence relation on $X$ from
\cref{lemma:expertise_set_equiv_reln}. An \emph{augmented expertise frame} is
obtained by adding to any frame $F$ an equivalence relation $R_\univ$ on $X$
such that $R_P \subseteq R_\univ$ (c.f.~\cite{bonanno2005simple}). An
augmented model $N$ is an augmented frame equipped with a valuation $v$. We
define a satisfaction relation $\sataug$ between augmented models and $\cLSA$
formulas, where \[ N, x \sataug \univ\phi \iff \text{ for all } y \in X,
x{R_\univ}y \text{ implies } N, y \sataug \phi \] and satisfaction for other
formulas is as in \cref{def:truth_conditions}. That is, $\univ\phi$ is no
longer the universal modality, and is instead interpreted via
relational semantics.

\begin{lemma}
\label{lemma:completeness_augmented}
$\sLSA$ is complete for $\cLSA$ with respect to augmented frames.
\end{lemma}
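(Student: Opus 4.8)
The plan is a canonical model construction adapted to the neighbourhood semantics of $\S$. The crucial preliminary observation is that, as noted after \cref{tab:axiomatisation}, the dual operator $\hat\S$ behaves as a normal S5 modality in $\sLSA$: \Ks{}, \Ts{} and \fives{} are precisely the K, T and 5 schemes for $\hat\S$; necessitation for $\hat\S$ is derivable from \necuniv{} and \inc{} via \mp{}; and \rs{} together with the K scheme yields replacement of provable equivalents. So I can treat $\hat\S$ exactly like the box of an S5 logic, with $\S = \neg\hat\S\neg$ as its diamond, while $\univ$ is S5 via \Kuniv{}, \Tuniv{}, \fiveuniv{}.

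First I would set up the standard machinery: a Lindenbaum lemma for $\sLSA$ and the set $X^c$ of maximal $\sLSA$-consistent sets, with valuation $v^c(p) = \{x \in X^c \mid p \in x\}$. Define two canonical relations on $X^c$, namely $x R^c_{\hat\S} y$ iff $\{\psi \mid \hat\S\psi \in x\} \subseteq y$, and $x R^c_\univ y$ iff $\{\psi \mid \univ\psi \in x\} \subseteq y$. Since $\hat\S$ and $\univ$ both satisfy the KT5 schemes, both relations are equivalence relations by the usual reflexivity/transitivity/symmetry-from-T/4/B arguments. Next I would check $R^c_{\hat\S} \subseteq R^c_\univ$: if $x R^c_{\hat\S} y$ and $\univ\psi \in x$, then $\hat\S\psi \in x$ by \inc{}, hence $\psi \in y$; so $x R^c_\univ y$. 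Now let $P^c$ be the set of all unions of $R^c_{\hat\S}$-equivalence classes. By \cref{lemma:expertise_set_equiv_reln}, $P^c$ satisfies \textbf{(P1)}--\textbf{(P3)} and $R_{P^c} = R^c_{\hat\S}$; combined with the inclusion just shown, $N^c = (X^c, P^c, R^c_\univ, v^c)$ is an augmented model.

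The heart of the proof is the truth lemma $N^c, x \sataug \phi \iff \phi \in x$, by induction on $\phi$. The Boolean cases are routine, and the $\univ$ case is the standard normal-modal-logic argument using the existence lemma for $R^c_\univ$. For the $\S$ case, observe that the smallest element of $P^c$ containing a set $Y \subseteq X^c$ is $R^c_{\hat\S}[Y] = \bigcup_{w \in Y}[w]_{R^c_{\hat\S}}$, since any member of $P^c$ containing $Y$ must contain the entire $R^c_{\hat\S}$-class of each of its points. Hence $N^c, x \sataug \S\phi$ holds iff $x \in R^c_{\hat\S}[\|\phi\|_{N^c}]$, i.e. (using symmetry of $R^c_{\hat\S}$) iff there is $w$ with $x R^c_{\hat\S} w$ and $w \in \|\phi\|_{N^c}$. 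By the induction hypothesis this says there is $w$ with $x R^c_{\hat\S} w$ and $\phi \in w$, which by the existence lemma for the normal operator $\hat\S$ is equivalent to $\neg\hat\S\neg\phi \in x$, that is, $\S\phi \in x$. This closes the induction, and any $\sLSA$-consistent formula, extended to some $x \in X^c$, is then satisfied at $x$ in the augmented model $N^c$, giving completeness.

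I expect the main obstacle to be exactly this $\S$ case: one has to make the neighbourhood clause for $\S$ line up with the diamond of $R^c_{\hat\S}$. The right move is to define $P^c$ as the unions of $R^c_{\hat\S}$-classes rather than via some ad hoc "canonical neighbourhood function", and it is precisely \cref{lemma:expertise_set_equiv_reln} that guarantees this $P^c$ is a genuine expertise set whose induced relation is the one interpreting $\S$. A secondary point requiring care is that $\sLSA$ consists of exactly the $\E$-free instances of \cref{tab:axiomatisation} minus \es{}, so that $\hat\S$ really does enjoy necessitation (via \inc{} and \necuniv{}, as already flagged in the text) and the inclusion $R^c_{\hat\S} \subseteq R^c_\univ$ indeed holds.
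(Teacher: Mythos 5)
Your proposal is correct and follows essentially the same route as the paper: a standard canonical model for the two normal S5 modalities $\hat\S$ and $\univ$, the inclusion $R^c_{\hat\S} \subseteq R^c_\univ$ from \inc{}, conversion of $R^c_{\hat\S}$ into an expertise set via \cref{lemma:expertise_set_equiv_reln}, and a truth lemma whose $\S$ case matches the neighbourhood clause with the canonical diamond. The only difference is cosmetic -- you argue the $\S$ case of the truth lemma directly from the structure of $P^c$, where the paper instead appeals to the correspondence of \cref{thm:s5_link}; both amount to the same computation.
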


\begin{proof}[sketch]
First note that from \Ks{}, \Ts{}, \fives{} and \rs{}, one can prove as
theorems of $\sLSA$ the usual KT5 axioms for the dual operator $\hat\S$ -- that is, $\entailsSA
\hat\S(\phi \rightarrow \psi) \rightarrow (\hat\S\phi \rightarrow \hat\S\psi)$,
$\entailsSA \hat\S\phi \rightarrow \psi$ and $\entailsSA \neg\hat\S\phi
\rightarrow \hat\S\neg\hat\S\phi$ -- where $\hat\S$ is an abbreviation for
$\neg\S\neg$. As remarked before, we also have the necessitation rule for both
$\hat\S$ and $\univ$ by \necuniv{} and \inc{}.

By the standard canonical model construction~\cite{blackburn2002modal}, we
obtain the canonical relational model $(X, R_{\hat\S}, R_{\univ}, v)$, where
$X$ is the set of all maximally $\sLSA$-consistent subsets (MCS) of $\cLSA$,
$R_{\hat\S}$ and $R_\univ$ are accessibility relations for $\hat\S$ and
$\univ$ respectively, and $\Delta \in X$ satisfies $\phi$ under the
relational semantics iff $\phi \in \Delta$, for any $\phi \in \cLSA$ (this fact
is known as the \emph{truth lemma}). Moreover, $R_{\hat\S}$ and $R_{\univ}$
are equivalence relations by the KT5 axioms for $\hat\S$ and $\univ$
respectively, and \inc{} implies $R_{\hat\S} \subseteq R_\univ$.
By \cref{lemma:expertise_set_equiv_reln}, there is an expertise set $P$ such
that $R_P = R_{\hat{S}}$.
Consequently, we
obtain an augmented model $N = (X, P, R_{\univ}, v)$. Applying the link
between expertise-based and relational semantics established in
\cref{thm:s5_link}, one can adapt the truth lemma to show that $N, \Delta
\sataug \phi$ iff $\phi \in \Delta$ for any MCS $\Delta \in X$ and $\phi \in
\cLSA$. Completeness now follows by contraposition. If $\phi \in \cLSA$ is not
a theorem of $\sLSA$, then $\{\neg\phi\}$ is $\sLSA$-consistent, and so there
is a MCS $\Delta$ containing $\neg\phi$ by Lindenbaum's
Lemma~\cite{blackburn2002modal}. Consequently $\phi \notin \Delta$, so $N,
\Delta \not\sataug \phi$ and $\phi$ is not valid in augmented frames. \qed

\end{proof}

Completeness of $\sLSA$ for (non-augmented) expertise frames follows by
considering \emph{generated sub-frames} of augmented frames.

\begin{lemma}
\label{lemma:completeness_lsa}
$\sLSA$ is complete for $\cLSA$ with respect to expertise frames.
\end{lemma}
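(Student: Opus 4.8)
The plan is to derive completeness for expertise frames from \cref{lemma:completeness_augmented} by passing from an augmented model to an ordinary expertise model via a point-generated submodel, and showing this does not change the truth values of $\cLSA$ formulas at the generating point. Suppose $\phi \in \cLSA$ is not a theorem of $\sLSA$. By \cref{lemma:completeness_augmented} there is an augmented model $N = (X, P, R_\univ, v)$ and a point $x_0 \in X$ with $N, x_0 \not\sataug \phi$. Let $X_0 = \{y \in X \mid x_0 R_\univ y\}$ be the $R_\univ$-equivalence class of $x_0$, and let $N_0 = (X_0, P_0, v_0)$ where $P_0 = \{A \cap X_0 \mid A \in P\}$ and $v_0$ is the restriction of $v$. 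First I would check that $(X_0, P_0)$ is a genuine expertise frame: $X_0 \in P_0$ since $X \in P$ by \textbf{(P1)}; closure under complements and arbitrary intersections in $P_0$ follows from the corresponding closure in $P$, intersecting everything with $X_0$ (using that $X \setminus (A \cap X_0) \cap X_0 = (X \setminus A) \cap X_0$).

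The key step is a restriction lemma: for every $\psi \in \cLSA$ and every $y \in X_0$, $N_0, y \sat \psi \iff N, y \sataug \psi$, where on the left $\sat$ is the (universal-modality) expertise semantics of \cref{def:truth_conditions} relative to $N_0$. This goes by induction on $\psi$. Boolean cases are immediate. For $\S\psi$: since $R_P \subseteq R_\univ$, every $R_P$-class is contained in a single $R_\univ$-class, so $X_0$ is a union of $R_P$-classes; using \cref{lemma:expertise_set_equiv_reln} this means the sets of $P$ relevant to evaluating $\S\psi$ at $y \in X_0$ — namely those containing $\|\psi\|_N$, or rather the local witnesses — restrict correctly, and one checks that the supersets of $\|\psi\|_{N_0}$ in $P_0$ are exactly the traces on $X_0$ of the supersets of $\|\psi\|_N$ in $P$ that matter. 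The crucial point is that because $X_0$ is $R_P$-closed (a union of cells), a set $A \in P$ with $\|\psi\|_N \cap X_0 \subseteq A$ can always be replaced by $A \cap X_0 \in P_0$ without affecting whether $y \in A$, and conversely any $A' \in P_0$ extends to $A' \cup (X \setminus X_0)$... but here one must be slightly careful, so instead I would argue directly via the partition: $y \sat \S\psi$ in $N_0$ iff $y$ lies in the smallest $R_P$-cell... actually iff $y \in \bigcup\{A_z \mid z \in \|\psi\|_{N_0}\}$, and since $\|\psi\|_{N_0} = \|\psi\|_N \cap X_0$ by IH and $X_0$ is a union of cells, this union of cells is the same whether computed inside $X_0$ or inside $X$. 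For $\univ\psi$: in $N_0$, $N_0, y \sat \univ\psi$ iff $\psi$ holds at all points of $X_0$; in $N$, $N, y \sataug \univ\psi$ iff $\psi$ holds at all $R_\univ$-successors of $y$, which is exactly $X_0$ since $y \in X_0$ and $R_\univ$ is an equivalence relation. So the two agree, using the IH.

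With the restriction lemma in hand the proof finishes quickly: $N_0, x_0 \not\sat \phi$, and $N_0$ is an expertise model, so $\phi$ is not valid in expertise frames; contrapositively, every $\cLSA$-formula valid in expertise frames is a theorem of $\sLSA$. I expect the main obstacle to be the $\S\psi$ case of the restriction lemma — specifically, verifying carefully that restricting the expertise set to $X_0$ interacts correctly with the superset-quantifier in the truth condition for $\S$, which is where the hypothesis $R_P \subseteq R_\univ$ (equivalently, that $X_0$ is a union of $R_P$-cells) does the real work. Everything else is routine bookkeeping about generated submodels, and one could alternatively phrase the whole argument in terms of \cref{thm:s5_link}, transferring to the S5 side, taking the generated submodel there (where invariance under generated submodels is standard for the basic modal language plus a universal modality interpreted over an equivalence relation), and transferring back; I would mention this as the conceptually cleaner route but carry out the direct argument above for self-containedness.
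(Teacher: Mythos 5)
Your proposal is correct and follows essentially the same route as the paper: restrict to the $R_\univ$-equivalence class of the falsifying point, prove an invariance lemma by induction on formulas whose only delicate case is $\S\psi$ (handled exactly as in the paper, via $R_P \subseteq R_\univ$ forcing the restricted domain to be a union of $R_P$-cells), and observe that $\univ$ becomes the genuine universal modality on the restricted domain. The only cosmetic difference is that you discard the $R_\univ$ component immediately and verify the expertise-frame axioms for $P_0$ up front, whereas the paper carries the augmented structure through the invariance lemma and drops it at the end.
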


\begin{proof}[sketch]
Suppose $\phi \in \cLSA$ is not a theorem of $\sLSA$. By
\cref{lemma:completeness_augmented}, there is an augmented model $N = (X, P,
R_\univ, v)$ and a state $x \in X$ such that $N, x \not\sataug \phi$. Let
$X' \subseteq X$ be the equivalence class of $x$ in $R_\univ$. Consider the
generated sub-model $N' = (X', P', R'_\univ, v')$, where $P' = \{A \cap X'
\mid A \in P\}$, $R'_\univ = R_\univ \cap (X' \times X')$ and $v'(p) = v(p)
\cap X'$. It can be shown that for all $\psi \in \cLSA$ and $y \in X'$, we have
$N, y \sataug \psi$ iff $N', y \sataug \psi$.\footnotemark{} Hence $N', x
\not\sataug \phi$.

Now, note that the relation $R'_\univ$ was obtained by restricting
$R_\univ$ to one of its equivalence classes $X'$. It follows that
$R'_\univ$ is in fact the universal relation $X' \times X'$ on $X'$.
Consequently, $N', y \sataug \univ\psi$ iff $N', z \sataug \phi$ for all $z
\in X'$, i.e. $\univ$ is just the universal modality for $N'$.

Writing $M$ for the non-augmented model obtained from $N'$ by simply dropping
the $R'_\univ$ component, we see that $M, y \sat \psi$ iff $N', y \sataug
\psi$, for all $y \in X'$ and $\psi \in \cLSA$. In particular, $M, x \not\sat
\phi$, so $\phi$ is not valid in all expertise frames. \qed

\footnotetext{
    This is clear by induction on formulas, except for the case $\S\psi$. Here
    we use the fact that $R_P \subseteq R_\univ$ to show $X' = \bigcup_{z \in
    X'}[z]_{R_P}$ -- where $[z]_{R_P}$ is the equivalence class of $z$ in $R_P$
    -- which implies $X' \in P$. Using the inductive hypothesis it is then
    straightforward to show that $N, y \sataug \S\psi$ iff $N', y \sataug
    \S\psi$.
}
\end{proof}

The completeness of $\sL$ for the validities of the whole language $\cL$ now
follows. Indeed, let $g: \cL \to \cLSA$ be the natural embedding of $\cL$ in
$\cLSA$, where $g(\E\phi) = \univ(\S g(\phi) \rightarrow
g(\phi))$.\footnote{\ldots and $g(p) = p$, $g(\neg\phi) = \neg g(\phi)$,
$g(\phi \and \psi) = g(\phi) \and g(\psi)$, $g(\S\phi) = \S g(\phi)$ and
$g(\univ\phi) = \univ g(\phi)$.} In light of earlier remarks and axiom \es{},
we have both $\phi \equiv g(\phi)$ and $\entailsL \phi \leftrightarrow g(\phi)$
for all $\phi \in \cL$.  Consequently, $\sat \phi$ implies $\sat g(\phi)$ and
thus $\entailsSA g(\phi)$ by \cref{lemma:completeness_lsa}; since $\sL$ extends
$\sLSA$ we have $\entailsL g(\phi)$, and $\entailsL g(\phi) \rightarrow \phi$
implies $\entailsL \phi$ by \mp{}. This shows (weak) completeness, and
\cref{thm:axiomatisation} is proved.

\section{Conclusion}
\label{sec:conclusion}

This paper introduced a simple modal language to reason about the
expertise of an information source. We used the notion of ``soundness'' -- when
information is true after ignoring parts on which the source has no expertise
-- to establish a connection with S5 epistemic logic. This provided
alternative interpretation of expertise, and led to a sound and complete
axiomatisation.

There are many possible directions for future work. For instance, it may be
unrealistic to expect the expertise set of a source is fully known up front.
Methods for \emph{estimating} the expertise, e.g. based on past
reports~\cite{dastani2004inferring}, could be developed to reason about
expertise approximately in practical settings. The ``binary'' notion of
expertise we consider may also be unrealistic: either $\E\phi$ holds or
$\neg\E\phi$ holds. Enriching the language and semantics to handle
\emph{graded} or \emph{probabilistic} levels is a natural generalisation which
would allow a more nuanced discussion of expertise.

One could also investigate the relation between expertise and \emph{trust}.
For example, can the trustworthiness of a source on $\phi$ be \emph{derived}
from expertise on $\phi$? The language $\cL$ could be extended with a trust
operator $\mathsf{T}$ to model this formally in future work.

Also note that in this paper we only consider \emph{static} expertise. In
reality, expertise may change over time as new evidence becomes available and
as the epistemic state of the information source evolves. One could introduce
\emph{dynamic operators}, as is done in Dynamic Epistemic Logic, to model this
change in expertise in response to evidence and other epistemic events. When it
comes to the interaction between expertise and evidence specifically,
\emph{evidence logics}~\cite{van2011dynamic,vanbenthem2014106} may be highly
relevant. These logics use neighbourhood semantics to interpret the evidence
modalities, which is technically (and perhaps also conceptually) similar to our
semantics for the expertise modality. We save the detailed analysis and
comparison for future work.

\section*{Acknowledgements}

We thank the anonymous ESSLLI 2021 student session reviewers, whose insightful
comments and suggestions have greatly improved the paper.

\bibliographystyle{splncs04}
\bibliography{references}
\end{document}